\newcommand*{\mailto}[1]{\href{mailto:#1}{\nolinkurl{#1}}}
\newtheorem{theorem}{Theorem}[section]
\newtheorem{definition}[theorem]{Definition}
\newtheorem{lemma}[theorem]{Lemma}
\newtheorem{proposition}[theorem]{Proposition}
\newcommand{\R}{{\mathbb R}}
 \newcommand{\br}{|\kern-.25em|\kern-.25em|}
\DeclareMathOperator{\re}{Re}
\numberwithin{equation}{section}
\begin{document}

\title[Asymptotic stability of stationary states]{Asymptotic stability of stationary states
 in wave equation coupled to nonrelativistic particle}

\author[E.\ Kopylova]{Elena Kopylova}
\address{Institute for Information Transmission Problems\\ Russian Academy of Sciences}
\email{\mailto{ek@iitp.ru}}

\author[A.\ Komech]{Alexander Komech}
\address{Institute for Information Transmission Problems\\ Russian Academy of Sciences}
\email{\mailto{akomech@iitp.ru}}

\thanks{{\it The research was carried out at the IITP RAS at the expense of the Russian
Foundation for Sciences (project  14-50-00150) }}

\keywords{Wave equation, nonrelativistic particle, Cauchy problem, stationary states, asymptotic stability}
\subjclass[2010]{35L05, 81Q15}

\begin{abstract}
We consider the Hamiltonian system consisting      
of scalar wave field  and a single particle coupled in a translation      
invariant manner. The point  particle is subject to an     
external potential. The stationary solutions of the system are      
a Coulomb type wave field centered at those particle positions for which      
the external force vanishes. It is assumed that the charge density satisfies the Wiener
condition which is a version of the ``Fermi Golden Rule''.
We prove that in the large time approximation any finite energy solution,
with the initial state close to the some stable stationary solution, 
is a sum of this stationary solution and a dispersive wave 
which is a solution of the free wave equation.  
\end{abstract}

\maketitle

\section{Introduction} 
Our paper concerns the problem of nonlinear 
field-particle interaction.
We consider 
a scalar real-valued wave field $\phi (x)$ in ${\mathbb R}^3$ 
coupled to a nonrelativistic particle with position $q$ and momentum $p$ 
governed by 
\begin{eqnarray}\label{WP2}
\left\{\begin{array}{rclrcl}
\dot \phi(x,t)\!\!&\!\!=\!\!&\!\!\pi(x,t),&
\dot\pi(x,t)\!\!&\!\!=\!\!&  \Delta\phi(x,t)-\rho(x-q(t)),\medskip\\
\dot q(t)\!\!&\!\!=\!\!&\!\! p(t),&
\dot p(t)\!\!&\!\!=\!\!&\!\!\!-\nabla V(q(t))+\!\!\displaystyle\int \phi(x,t)\nabla\rho(x-q(t))\, dx .
\end{array}\right.
\end{eqnarray} 
This is a Hamilton system with the Hamilton functional 
\begin{equation}\label{hamilq0} 
{\mathcal H}(\phi ,\pi ,q,p)=\frac 12\int\Big(|\pi(x)|^2+ 
|\nabla \phi (x)|^2)dx+\int \phi (x)\rho (x-q)dx+\frac 12 p^2. 
\end{equation} 
The first two equations in \eqref{WP2}  for the fields are equivalent to the 
wave equation with the source $\rho(x-q)$. 
The form of the last two equations
is determined by the choice of the nonrelativistic 
kinetic energy $p^2/2$ in \eqref{hamilq0}.

It is easy to find stationary solutions to the system \eqref{WP2}. We define
For $q\in{\mathbb R}^3$ we set
\begin{equation}\label{sq}
s_{q}(x) =-\int\frac{d^3y}{4\pi|y-x|}\rho(y-q).
\end{equation}
Let $Z=\{q\in{\mathbb R}^3: \nabla V(q)=0\}$ be the set  of critical points for
$V$. Then the set ${\mathcal S}$ of stationary solutions  is given by
\begin{equation}\label{WPss}
{\mathcal S}=\{(\phi, \pi, q,  p)=(s_{q}, 0, q, 0)=:S_{q}|~~q\in Z\}.
\end{equation}
We assume that $V\in C^2({\mathbb R}^3)$ and set
\begin{equation}\label{V-as}
 V_{0}:=\inf_{q\in {\mathbb R}^3}V(q)>-\infty.
\end{equation}
For the charge distribution $\rho$ we assume that
\begin{equation}\label{rho-as}
\rho\in C_0^\infty({\mathbb R}^3),\quad
\rho(x)=0{\rm ~~for~~}|x|\geq R_\rho,~~~~\rho(x)=\rho_r(|x|).
\end{equation}
We also assume that the Wiener condition is satisfied:
\begin{equation}\label{FGR}
\hat\rho(k)= \int d^3x\, e^{ik x}\rho(x)\not=0,\quad k\in{\mathbb R}^3.
\end{equation}
It is an analogue of the Fermi Golden Rule:
the coupling term  $\rho(x-q)$ is not orthogonal
to the eigenfunctions $e^{ikx}$ of the continuous spectrum of the linear part
of the   equation (cf. \cite{SW}).

Finally we assume that some  $q^*\in Z$ is a stable critical point of $V$:
\begin{definition}\label{WP1.1}
 A point $q^*\in Z$ is  stable if $d^2V(q^*)>0$ as a quadratic form.
\end{definition}
Our main results are the following:\\
For solutions to the system (\ref{WP2}) 
with initial data close to  $S_{q^*}=(s_{q^*},0,q^*,0)$ 
we prove the asymptotics
\begin{equation}\label{Yconv}
\Vert\phi(\cdot,t)-s_{q^*}\Vert_{\dot H^1_{-\sigma}}
+\Vert\pi(\cdot,t)\Vert_{L^2_{-\sigma}}+|q(t)-q^*|+|p(t)|={\mathcal O}(t^{-\sigma}),
\quad t\pm\infty,\quad\sigma>1
\end{equation}
in weighted Sobolev norms (see (\ref{Sob})).
Such asymptotics in global energy norm  do not hold
in general since  the field components 
may contain a dispersive term, whose energy
radiates to infinity as $t\to\pm\infty$ but does not converge to zero.
Namely, in global energy norms  we obtain the following scattering asymptotics:
\begin{equation}\label{Si} 
(\phi(x,t), \pi(x,t))\sim (s_{q^*}, 0) +W_0(t)\Phi_\pm,\quad t\to\pm\infty. 
\end{equation} 
Here $W_0(t)$ is the dynamical group of the free wave equation, and
$\Phi_\pm$ are the corresponding asymptotic scattering states, 

Asymptotics similar to (\ref{Yconv}) in local energy semi-norms was  obtained in \cite{KSK} 
in the case of compactly supported difference $\phi(x,0)-s_{q^*}(x)$. We get rid of this
restriction in present paper.

For the proof we establish long-time decay of the linearized  dynamics
using our results \cite{3w} on the dispersion decay for the wave
equation in weighted Sobolev norms. Then we apply the method of majorants.

Let us comment on previous results in these directions.
The asymptotic stability of the solitons was proved in  
\cite{IKV2006} for the system of type (\ref{WP2}) with 
the Klein-Gordon  equation instead of the wave equations.
This result was extended in \cite{IKS2011,KKop2006,KKopS2011,IKV2011}
to similar  system with the Schr\"odinger, Dirac, wave and Maxwell equations. 
The survey of these results can be found in  \cite{Im2013}.

\section{Main results}
To formulate our results precisely, we  introduce 
a suitable phase space.  
Let $L^2$ be the real Hilbert space $L^2({{\mathbb R}}^3)$ with scalar product 
$\langle\cdot,\cdot\rangle$.
Denote $\dot H^1$ the completion of real space $C_0^\infty({\mathbb R}^3)$
with norm $\Vert\nabla\phi(x)\Vert_{L^2}$.
Equivalently, using Sobolev's embedding theorem (see \cite{Li}),
$$
\dot H^1=\{\phi(x)\in L^6({\mathbb R}^3):~~|\nabla\phi(x)|\in L^2\}.
$$
Introduce  the weighted Sobolev spaces $L^2_{\alpha}$ and $\dot H^1_{\alpha}$,
$\alpha\in{\mathbb R}$ with the norms 
\begin{equation}\label{Sob}
\Vert\psi\Vert_{L^2_{\alpha}}:=\Vert(1+|x|)^{\alpha}\psi\Vert_{L^2},\quad
\Vert\psi\Vert_{\dot H^1_{\alpha}}:=\Vert(1+|x|)^{\alpha}\psi\Vert_{\dot H^1}.
\end{equation}

\begin{definition}
 i) The phase space ${\mathcal E}$ is the real Hilbert space
$\dot H^1\oplus L^2\oplus {{\mathbb R}}^3\oplus {{\mathbb R}}^3$ of states
$Y=(\psi ,\pi ,q,p)$ equipped with the finite norm
$$
\Vert Y\Vert_{{\mathcal E}}=\Vert \nabla\psi \Vert_{L^2} +
\Vert\pi \Vert_{L^2}+|q|+|p|.
$$
ii) ${\mathcal E}_{\alpha}$ is the space
$\dot H^1_{\alpha}\oplus L^2_{\alpha}\oplus {{\mathbb R}}^3\oplus {{\mathbb R}}^3$
equipped with the norm
\begin{equation}\label{alfa}
\Vert Y\Vert_{{\mathcal E}_\alpha}=
\Vert \psi \Vert_{\dot H^1_\alpha} +\Vert\pi \Vert_{L^2_\alpha}+|q|+|p|.
\end{equation}
iii) ${\mathcal F}_{\alpha}$ is the space
$\dot H^1_{\alpha}\oplus H^0_{\alpha}$ of fields
$F =(\psi ,\pi )$ equipped with the finite norm
\begin{equation}\label{Falfa}
\Vert \,F\Vert_{{\mathcal F}_\alpha}=\Vert \psi \Vert_{\dot H^1_\alpha} +\Vert\pi \Vert_{L^2_\alpha}.
\end{equation}
\end{definition}
We consider the Cauchy problem for the Hamiltonian system (\ref{WP2})
\begin{equation}\label{WP2.1}
\dot Y(t)=F(Y(t)),~~t\in{\mathbb R},~~Y(0)=Y_0.
\end{equation}
All derivatives are understood in the sense of distributions. Here,
$$
Y(t)=(\phi(t), \pi(t), q(t), p(t)),~~ Y_0=(\phi_0, \pi_0, q_0, p_0)\in {\mathcal E}
$$.
\begin{lemma}\label{WPexistence} (sf. \cite [Lemma 2.1]{KSK})
Let  (\ref{V-as}) and (\ref{rho-as}) be satisfied.
 Then the following assertions hold.\\
 (i) For every $Y_0\in {\mathcal E}$ the Cauchy problem (\ref{WP2.1}) has a unique
 solution $Y(t)\in C({\mathbb R}, {\mathcal E})$.\\
 (ii) For every $t\in{\mathbb R}$ the map $Y_0\mapsto Y(t)$ is continuous on ${\mathcal E}$.\\
 (iii) The energy is conserved, i.e.,
 \begin{equation}\label{WPEC}
  {\mathcal H}(Y(t))={\mathcal H}(Y_0)~~~for~~t\in{\mathbb R}.
 \end{equation} 
  (iv) The energy is bounded from below, and
 \begin{equation}\label{WPHm}
  \inf_{Y\in {\mathcal E}} {\mathcal H}(Y)=V_{0}+\frac 12(\rho,\Delta^{-1}\rho)
\end{equation}
\end{lemma}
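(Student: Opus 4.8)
The plan is to establish Lemma~\ref{WPexistence} by combining a priori energy bounds with a fixed-point / continuation argument for the coupled system, closely following the strategy of \cite[Lemma 2.1]{KSK}. First I would rewrite \eqref{WP2} in the integral (Duhamel) form: the field part $(\phi,\pi)$ is given by the free wave propagator $W_0(t)$ acting on the initial data plus a convolution with the source $-\rho(\cdot-q(s))$, while $(q,p)$ is recovered from the Newtonian ODE with right-hand side $-\nabla V(q)+\int\phi\,\nabla\rho(\cdot-q)\,dx$. For fixed $q(\cdot)\in C([0,T],{\mathbb R}^3)$ the field map is affine and bounded on $\dot H^1\oplus L^2$ because $\rho\in C_0^\infty$ and $W_0(t)$ is a strongly continuous group of bounded operators on ${\mathcal F}_0=\dot H^1\oplus L^2$ (energy conservation for the free wave equation); conversely, for fixed $\phi(\cdot)$ the particle map is a standard ODE. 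Local existence and uniqueness in $C([0,T],{\mathcal E})$ then follows from the contraction mapping principle on a short time interval, using that $\nabla V\in C^1$ is locally Lipschitz and that $\phi\mapsto\int\phi\,\nabla\rho(\cdot-q)\,dx$ is Lipschitz in $q$ with constants controlled by $\|\nabla\rho\|$, $\|\nabla^2\rho\|$ and $\|\phi\|_{\dot H^1}$ (the latter via the Sobolev embedding $\dot H^1\hookrightarrow L^6$, so that the pairing with the compactly supported $\nabla\rho$ makes sense). Continuous dependence on $Y_0$ (assertion (ii)) comes out of the same contraction estimate, applied on each time step.

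Next I would prove the conservation law \eqref{WPEC} and the lower bound \eqref{WPHm}, and use them to upgrade local to global existence. For smooth solutions, differentiating ${\mathcal H}(Y(t))$ along \eqref{WP2} gives $\frac{d}{dt}{\mathcal H}=0$ by a direct computation: the $\int\pi\dot\pi+\nabla\phi\cdot\nabla\dot\phi$ terms produce $\int\pi(\Delta\phi-\rho(\cdot-q))$ after integration by parts, the $\frac{d}{dt}\int\phi\,\rho(\cdot-q)$ term produces $\int\pi\,\rho(\cdot-q)-\dot q\cdot\int\phi\,\nabla\rho(\cdot-q)$, and $p\dot p=-p\cdot\nabla V(q)+p\cdot\int\phi\,\nabla\rho(\cdot-q)$; everything cancels. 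For general finite-energy data one passes to the limit using assertion (ii) and density of smooth data, or equivalently checks that each term in ${\mathcal H}$ is continuous on ${\mathcal E}$. To get \eqref{WPHm}, I would complete the square in the field variables: writing $\phi=s_q+\psi$ with $s_q$ defined by \eqref{sq} (so $\Delta s_q=\rho(\cdot-q)$ and $\int|\nabla s_q|^2=-\int s_q\rho(\cdot-q)=(\rho,-\Delta^{-1}\rho)$ after translation), one finds
\begin{equation}\label{completesq}
{\mathcal H}(\phi,\pi,q,p)=\frac12\|\pi\|_{L^2}^2+\frac12\|\nabla\psi\|_{L^2}^2+\frac12 p^2+V(q)+\frac12(\rho,\Delta^{-1}\rho),
\end{equation}
which is manifestly bounded below by $V_0+\frac12(\rho,\Delta^{-1}\rho)$, with the infimum approached by taking $\pi=0$, $\psi=0$, $p=0$ and $q$ minimizing $V$. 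Note $(\rho,\Delta^{-1}\rho)<0$ is finite because $\hat\rho\in C_0^\infty$ decays and $|\hat\rho(k)|^2/|k|^2$ is integrable near $k=0$ in ${\mathbb R}^3$.

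Finally, global existence (assertion (i)) follows by the standard continuation argument: from \eqref{completesq} and energy conservation, $\|\nabla\psi(t)\|_{L^2}^2+\|\pi(t)\|_{L^2}^2+p(t)^2\le 2({\mathcal H}(Y_0)-V_0-\frac12(\rho,\Delta^{-1}\rho))$ is bounded uniformly in $t$, hence $\|\phi(t)\|_{\dot H^1}\le\|\nabla\psi(t)\|_{L^2}+\|\nabla s_{q(t)}\|_{L^2}$ is bounded (the latter is a fixed constant independent of $q$ by translation invariance); the momentum $p(t)$ is bounded, so $|q(t)-q_0|\le\int_0^t|p(s)|\,ds$ grows at most linearly and cannot escape in finite time. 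Therefore the local solution extends to all of ${\mathbb R}$, and since the a priori bounds are time-independent the solution stays in $C({\mathbb R},{\mathcal E})$. The main obstacle I anticipate is purely technical rather than conceptual: making the field nonlinearity $\phi\mapsto\int\phi\,\nabla\rho(\cdot-q)\,dx$ rigorous for merely $\dot H^1$ fields (which are only $L^6$, not $L^2$) and verifying that all manipulations — the integration by parts in the energy identity and the completion of the square — are justified in the distributional/finite-energy setting; this is handled by the Sobolev embedding together with the compact support and smoothness of $\rho$, so it reduces to bookkeeping with the norms in \eqref{Sob}.
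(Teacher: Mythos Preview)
Your proposal is correct and follows the standard route for such coupled field--particle systems: local existence via Duhamel/contraction, energy conservation by direct differentiation, the lower bound \eqref{WPHm} by completing the square $\phi=s_q+\psi$, and global continuation from the resulting a priori bound. The computation leading to \eqref{completesq} is right, and your remarks about handling the pairing $\int\phi\,\nabla\rho(\cdot-q)\,dx$ for $\phi\in\dot H^1$ via $\dot H^1\hookrightarrow L^6$ and the compact support of $\rho$ address the only genuine technical point.

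Note, however, that the paper does not actually supply a proof of this lemma: it is stated with the parenthetical reference ``(sf.\ \cite[Lemma~2.1]{KSK})'' and no argument is given in the text. So there is no in-paper proof to compare against; your sketch is essentially a reconstruction of what the cited reference \cite{KSK} does, and it is adequate for that purpose.
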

\medskip
Our first result is the following long-time  convergence 
in ${\mathcal E}_{-\sigma}$ to the stationary stable state:
\begin{theorem}\label{WPB}
Let conditions (\ref{V-as})- (\ref{FGR} hold, and let
$Y(t)$ be a solution to the Cauchy problem  (\ref{WP2.1}) with 
initial state $Y_0\in{\mathcal E}$  close 
to $S_{q^*}=(s_{q^*},0,q^*,0)$ with stable $q^*\in Z$:
\begin{equation}\label{close}
d_0:=\Vert \nabla(\phi_0-s_{q^*})\Vert_{L^2_\sigma}+\Vert \pi_0\Vert_{L^2_\sigma}
+|q_0-q^*|+|p_0|\ll 1,
\end{equation}
where $\sigma> 1$.
Then for sufficiently small  $d_0$ 
\begin{equation}\label{WPEE1}
\Vert Y(t)-S_{q^*}\Vert_{{\mathcal E}_{-\sigma}}\le C(d_0)(1+|t|)^{-\sigma},\quad t\in\R.
\end{equation}
\end{theorem}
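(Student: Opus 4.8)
The plan is to linearize the dynamics around the stationary state $S_{q^*}$ and treat the nonlinear remainder by a majorant (bootstrap) argument, combining the dispersive decay of the linearized group with the algebraic structure forced by the Wiener condition \eqref{FGR} and the stability hypothesis $d^2V(q^*)>0$. First I would write $Y(t)=S_{q^*}+X(t)$ with $X=(\Psi,\Pi,Q,P)$, and expand the right-hand side of \eqref{WP2.1} as $\dot X(t)=AX(t)+N(X(t))$, where $A$ is the generator of the linearized flow at $S_{q^*}$ and $N$ collects the quadratic-and-higher terms: the field equations contribute the smooth source $-[\rho(x-q^*-Q)-\rho(x-q^*)+\nabla\rho(x-q^*)\cdot Q]$, and the particle equation contributes $-[\nabla V(q^*+Q)-d^2V(q^*)Q]$ together with $\int(\Psi+s_{q^*}(\cdot)-s_{q^*}(\cdot-Q))\nabla\rho(\cdot-q^*-Q)$. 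All of these are $\mathcal{O}(|Q|^2+\|\Psi\|\,|Q|+|Q|\,\|\nabla\Psi\|)$ and, crucially, compactly supported in $x$ (uniformly for bounded $Q$), hence controllable in the weighted norms.

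The core of the argument is the decay estimate for the linearized group $e^{tA}$ on $\mathcal E_{-\sigma}$. Here I would invoke the dispersion decay for the free wave equation in weighted Sobolev norms from \cite{3w}, and upgrade it to $e^{tA}$ by a Fredholm/resolvent argument: writing the resolvent $(A-\lambda)^{-1}$ as a perturbation of the free wave resolvent plus a finite-rank piece governed by $\hat\rho$, the Wiener condition \eqref{FGR} guarantees that the only possible singularity — the one at $\lambda=0$ coming from the particle's neutral modes — is exactly compensated by the nondegeneracy $d^2V(q^*)>0$, so that $(A-\lambda)^{-1}$ has no eigenvalues or resonances in the closed right/left half-plane except a controlled structure at zero. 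The standard argument (Fermi Golden Rule + Puiseux expansion of the resolvent at $\lambda=0$) then yields
\begin{equation}\label{linest}
\|e^{tA}X_0\|_{\mathcal E_{-\sigma}}\le C(1+|t|)^{-\sigma}\|X_0\|_{\mathcal E_{\sigma}},\qquad \sigma>1,
\end{equation}
possibly after projecting off a subspace that turns out to be absent under our hypotheses. I would also record the auxiliary bound $\|e^{tA}X_0\|_{\mathcal E}\le C\|X_0\|_{\mathcal E}$ (energy boundedness of the linearized flow, which also follows from Lemma~\ref{WPexistence}-type energy conservation for the linearized Hamiltonian together with $d^2V(q^*)>0$).

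With \eqref{linest} in hand the nonlinear argument is routine: from the Duhamel representation $X(t)=e^{tA}X_0+\int_0^t e^{(t-s)A}N(X(s))\,ds$, define the majorant $M(t)=\sup_{0\le s\le t}(1+s)^{\sigma}\|X(s)\|_{\mathcal E_{-\sigma}}$, use that $N(X)$ is compactly supported with $\|N(X(s))\|_{\mathcal E_\sigma}\le C\|X(s)\|_{\mathcal E_{-\sigma}}^2$ (quadratic in the weighted norm, since the support is compact so weights are harmless and the $\dot H^1$–$L^\infty$ Sobolev embedding controls the field contribution), and the convolution inequality $\int_0^t(1+t-s)^{-\sigma}(1+s)^{-2\sigma}\,ds\le C(1+t)^{-\sigma}$ for $\sigma>1$, to close the bootstrap: $M(t)\le C d_0+C M(t)^2$, hence $M(t)\le 2C d_0$ for $d_0$ small, which is \eqref{WPEE1}. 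One also needs an a priori bound $|q(t)-q^*|+\|X(t)\|_{\mathcal E}\ll 1$ for all $t$ to justify that the expansion stays in the perturbative regime; this comes from energy conservation \eqref{WPEC}–\eqref{WPHm} near the strict minimum of the effective potential at $q^*$, via a standard Lyapunov argument. The main obstacle is the spectral step \eqref{linest}: establishing absence of embedded eigenvalues and resonances of $A$ and obtaining the sharp $(1+|t|)^{-\sigma}$ rate (rather than a slower one) requires carefully combining the weighted decay of \cite{3w} with the Fermi-Golden-Rule nondegeneracy at $\lambda=0$, and it is here that both hypotheses \eqref{FGR} and $d^2V(q^*)>0$ are essential.
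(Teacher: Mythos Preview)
Your proposal is correct and follows the same architecture as the paper: linearize $\dot X=AX+B(X)$, prove weighted decay of $e^{tA}$ via resolvent analysis, bound $\|B(X)\|_{\mathcal E_\sigma}\le C\|X\|_{\mathcal E_{-\sigma}}^2$ using the compact support of the source terms, and close the majorant inequality $m\le Cd_0+Cm^2$ using the convolution bound for $\sigma>1$.

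The only substantive difference is in how the spectral step is executed. Where you sketch an abstract Fredholm/Puiseux argument, the paper computes the resolvent $(A-\lambda)^{-1}$ explicitly: the field components are expressed through the free wave resolvent $g_\lambda=(-\Delta+\lambda^2)^{-1}$, and the particle components reduce to inverting a $6\times6$ matrix $M(\lambda)$ which, by the radial symmetry of $\rho$, splits into three identical $2\times2$ blocks with $\det M(\lambda)=(\lambda^2+\omega_0^2+\omega_1^2-h(\lambda))^3$, where $h(\lambda)=\int k_1^2|\hat\rho(k)|^2(k^2+\lambda^2)^{-1}dk$. Then $\det M(0)=\omega_0^6>0$ by the stability hypothesis, and for real $\nu\ne0$ the Sokhotsky--Plemelj formula gives $\Im h(i\nu+0)\ne0$ directly from the Wiener condition, so $M^{-1}(i\nu+0)$ is analytic on all of $\mathbb R$. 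In particular $\lambda=0$ is a \emph{regular} point of the resolvent: your anticipated Puiseux expansion and threshold-resonance analysis are unnecessary here, and it is precisely this regularity that yields the clean $(1+|t|)^{-\sigma}$ rate without any projections. The paper's route is more elementary and self-contained; yours would generalize more readily to nonradial $\rho$ or anisotropic Hessians $d^2V(q^*)$, at the cost of the extra machinery.
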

Our second result is the following scattering long-time asymptotics 
in global energy norms for the field components of the solution:
\begin{theorem}\label{main}
Let the assumptions of Theorem \ref{WPB} hold.
Then for  sufficiently small $d_0$
\begin{equation}\label{S}
(\phi(x,t), \pi(x,t))= (s_{q^*}, 0) +W_0(t)\Phi_\pm+r_\pm(x,t), \quad t\to\pm\infty,
\end{equation}
where $W_0(t)$ is the dynamical group of the
free wave equation, $\Phi_\pm\in \dot H^1\oplus L^2$,  and
\begin{equation}\label{rm} 
\Vert r_\pm(t)\Vert_{\dot H^1\oplus L^2}={\mathcal O}(|t|^{-\sigma+1}),\quad t\to\pm\infty. 
\end{equation}
\end{theorem}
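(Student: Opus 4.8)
The plan is to derive Theorem~\ref{main} as a corollary of Theorem~\ref{WPB}, using the decay estimate \eqref{WPEE1} in weighted norms together with the classical asymptotic completeness for the free wave equation. First I would write the field components as $u(t):=(\phi(\cdot,t)-s_{q^*},\pi(\cdot,t))$ and note that $u$ satisfies an inhomogeneous free wave equation: from \eqref{WP2}, $\dot\phi=\pi$ and $\dot\pi=\Delta\phi-\rho(x-q(t))=\Delta(\phi-s_{q^*})+[\Delta s_{q^*}-\rho(x-q(t))]$, and since $\Delta s_{q^*}=\rho(\cdot-q^*)$ by \eqref{sq}, the source is $g(x,t):=\rho(x-q^*)-\rho(x-q(t))$. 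Hence by the Duhamel formula $u(t)=W_0(t)u(0)+\int_0^t W_0(t-s)(0,g(\cdot,s))\,ds$.

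The key point is that the source $g(\cdot,s)$ decays and is uniformly compactly supported (after a fixed time, since $q(s)\to q^*$), so that the Duhamel integral converges in $\dot H^1\oplus L^2$. Indeed $|q(s)-q^*|={\mathcal O}((1+|s|)^{-\sigma})$ by \eqref{WPEE1}, and since $\rho\in C_0^\infty$ with $\nabla\rho$ bounded, the mean value theorem gives $\|g(\cdot,s)\|_{L^2}\le C|q(s)-q^*|\le C(1+|s|)^{-\sigma}$, with supports contained in a fixed ball. Then I would set
\begin{equation}\label{Phi-def}
\Phi_\pm:=u(0)+\int_0^{\pm\infty}W_0(-s)(0,g(\cdot,s))\,ds,
\end{equation}
which is a well-defined element of $\dot H^1\oplus L^2$ because $W_0(-s)$ is an isometry on the energy space and the integrand is integrable in $s$ by the bound above. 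Consequently
\begin{equation}\label{rm-est}
r_\pm(t)=u(t)-W_0(t)\Phi_\pm=-\int_t^{\pm\infty}W_0(t-s)(0,g(\cdot,s))\,ds,
\end{equation}
and taking the energy norm, using unitarity of $W_0$ and $\sigma>1$,
\begin{equation}\label{rm-bound}
\Vert r_\pm(t)\Vert_{\dot H^1\oplus L^2}\le\int_{|t|}^{\infty}C(1+s)^{-\sigma}\,ds\le C(1+|t|)^{-\sigma+1},
\end{equation}
which is exactly \eqref{rm}.

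The main technical obstacle is justifying that the Duhamel integral produces an element of the \emph{homogeneous} energy space $\dot H^1\oplus L^2$ rather than merely a local-energy object: one must check that $W_0(t-s)(0,g(\cdot,s))$ stays bounded in $\dot H^1$ uniformly, which requires controlling $\|\nabla(-\Delta)^{-1}g(\cdot,s)\|$-type quantities; here the spherical symmetry of $\rho$ and the fact that $g$ has \emph{zero spatial mean only approximately} must be handled — in fact $\int g(\cdot,s)\,dx=0$ identically since $\int\rho=\int\rho(\cdot-q)$, which gives the needed decay of the low-frequency part and keeps everything in $\dot H^1$. A secondary point is that Theorem~\ref{WPB} is stated for the full state in ${\mathcal E}_{-\sigma}$; I would simply restrict to the field part, which only strengthens the estimate. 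Finally I would remark that $\Phi_+$ and $\Phi_-$ generically differ, reflecting genuine radiation, and that the borderline loss from $\sigma$ to $\sigma-1$ in \eqref{rm} is the expected price of integrating the source in time.
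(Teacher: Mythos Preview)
Your proof is correct and follows essentially the same route as the paper: write $u(t)=(\phi-s_{q^*},\pi)$, derive the inhomogeneous free wave equation with source $g(\cdot,s)=\rho(\cdot-q^*)-\rho(\cdot-q(s))$, apply Duhamel, define $\Phi_\pm$ via the convergent improper integral $\int_0^{\pm\infty}W_0(-s)(0,g(\cdot,s))\,ds$, and bound the tail using unitarity of $W_0$ together with $\|g(\cdot,s)\|_{L^2}\le C|q(s)-q^*|\le C(1+|s|)^{-\sigma}$ from Theorem~\ref{WPB}. The only remark is that your ``main technical obstacle'' paragraph is unnecessary: since $(0,g(\cdot,s))\in\dot H^1\oplus L^2$ with norm exactly $\|g(\cdot,s)\|_{L^2}$, unitarity of $W_0$ on the energy space already keeps $W_0(t-s)(0,g(\cdot,s))$ in $\dot H^1\oplus L^2$ with the same norm, and no appeal to vanishing mean or low-frequency behaviour is needed.
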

It suffices to  prove  (\ref{S}) for positive $t\to+\infty$ 
since the system (\ref{WP2}) is time reversible.  
 \setcounter{equation}{0}
 \section{Linearization at a stationary state}

For notational simplicity we also assume isotropy in the sense that
\begin{equation}\label{WPhesse}
 \partial_i\partial_j V(q^*)=\omega^2_0\delta_{ij},\
    i,j=1,2,3,\ \omega_0>0\,.
\end{equation}
Without loss of generality we take $q^*=0$.\\
Let $S_q = S_0 = (s_0, 0, 0, 0)$ be the
stationary state of (\ref{WP2}) corresponding to $q^*=0$, and
$Y_0=(\phi_0, \pi_0, q_0, p_0)\in {\mathcal E}$ be an arbitrary
initial data  satisfying (\ref{close}). 
Consider  $Y(x,t)=(\phi(x,t),\pi(x,t),q(t), p(t))
\in{\mathcal E}$ the solution to (\ref{WP2}) with $Y(0)=Y_0$.

To linearize (\ref{WP2}) at $S_0$, we
set $\phi(x,t)=\psi(x,t)+s_0(x)$. Then (\ref{WP2}) becomes
\begin{eqnarray}\label{WPnonlin}
\left\{\begin{array}{lll}
 \dot{\psi}(x, t)&=&\pi (x, t)\\
 \dot{\pi}(x, t)&=&\Delta \psi(x, t)+\rho(x)-\rho(x-q(t))\\
 \dot q(t) &=&  p(t)\\
 \dot{p}(t)&=& -\nabla V(q(t))+
 \displaystyle\int d^{3}x\,\psi(x, t)\,\nabla\rho(x-q(t))\\
&&+\displaystyle\int d^{3}x\,s_0(x)[\nabla\rho (x-q(t))-\nabla\rho(x)]
\end{array}\right.
\end{eqnarray}
Introducing $X(t)=Y(t)-S_0=(\psi(t),\pi(t),q(t),p(t))\in C({\mathbb R}, {\mathcal E})$,
we rewrite the nonlinear system (\ref{WPnonlin}) in the form
\begin{equation}\label{WPAB}
 \dot X(t)=AX(t)+B(X(t)).
\end{equation}
Here $A$ is the linear operator defined by
\begin{equation}\label{AA} 
A\left( \begin{array}{c} 
\psi \\ \pi \\ q \\ p 
\end{array} \right):=\left( 
\begin{array}{cccc} 
0 & 1 & 0 & 0 \\ 
\Delta  & 0 & \nabla\rho\cdot & 0 \\ 
0 & 0 & 0 & E \\ 
\langle\cdot,\nabla\rho\rangle & 0 & -\omega^2_0-\omega^2_1 & 0 
\end{array} 
\right)\left( 
\begin{array}{c} 
\psi \\ \pi \\ q \\ p 
\end{array} 
\right)
\end{equation} 
with
\begin{equation}\label{WPO1}
 \omega^2_1\delta_{ij}=\frac{1}{3}\,\Vert \rho\Vert^2_{L^2}\delta_{ij}=-\,
  \int d^{\,3}x\,\partial_i s_0(x)\partial_j\rho(x) \,.
 \end{equation}
Here, the factor
$1/3$ is due to a spherical symmetry of $\rho(x)$ (sf. (\ref{rho-as})).

The nonlinear part is given by
\begin{equation}\label{WPBB}
B(X)=(0,\pi_1,0,p_1),
\end{equation}
where
\begin{equation}\label{pi1}
\pi_1=\rho(x)-\rho(x-q)-\nabla\rho(x)\cdot q
\end{equation}
and
\begin{eqnarray}\nonumber
p_1&=&-\nabla V(q)+\omega_0^2 q
+\int d^{\,3}x\,\psi(x) [\nabla\rho(x-q)-\nabla\rho (x)]\\
\label{p1}
&+&\int d^{\,3}x\,\nabla s_0(x)[\rho(x)-\rho(x-q)-\nabla\rho (x)\cdot q].
\end{eqnarray}
Let us consider the Cauchy problem for the linear equation
\begin{equation}\label{WPlin}
\dot Z(t)=AZ(t),\quad Z=(\Psi,\Pi, Q, P),\quad t\in{\mathbb R},
\end{equation}
with initial condition
\begin{equation}\label{WPic}
Z|_{t=0}=Z_0.
\end{equation}
System (\ref{WPlin}) is a formal Hamiltonian system with the quadratic Hamiltonian
$$
{\mathcal H}_0(Z)=\frac 12 \Big( {P^2} +\omega^2 Q^2+
\int d^3x\,(|\Pi(x)|^2+|\nabla\Psi(x)|^2-2\Psi(x)\nabla\rho(x)\cdot Q) \Big),
$$
which is the formal Taylor expansion of ${\mathcal H}(Y_0+Z)$ up to second order at $Z=0$.
 \begin{lemma}\label{WPexlin}
 Let the condition (\ref{rho-as}) be satisfied. Then the following assertion hold\\
 (i) For every $Z_0\in{\mathcal E}$ the Cauchy problem (\ref{WPlin}), (\ref{WPic})
 has a unique solution $Z(\cdot)\in C({\mathbb R},{\mathcal E})$.\\
 (ii) For every $t\in{\mathbb R}$ the map $U(t):Z_0\mapsto Z(t)$ is continuous on ${\mathcal E}$.\\
 (iii) For $Z_0\in{\mathcal E}$ the energy ${\mathcal H}_0$ is finite and conserved, i.e.
 \begin{equation}\label{WPec}
 {\mathcal H}_0(Z(t))={\mathcal H}_0(Z_0)~~for~~t\in{\mathbb R}.
 \end{equation}
 iv) For $Z_0\in{\mathcal E}$
 \begin{equation}\label{WPlinb}
 \Vert Z(t)\Vert_{\mathcal E} \leq C~~~for~~t\in{\mathbb R}
 \end{equation}
 with $C$ depends only on the norm $\Vert Z_0\Vert_{\mathcal E}$.
 \end{lemma}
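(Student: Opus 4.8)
The plan is to realise $A$ as a bounded perturbation of a manifestly admissible generator, which settles (i) and (ii), and then to extract (iii) and (iv) from the Hamiltonian structure together with a coercivity identity for $\mathcal H_0$. Write $A=A_0+A_1$, where $A_0$ acts on the field pair as the free wave generator $(\Psi,\Pi)\mapsto(\Pi,\Delta\Psi)$ on $\dot H^1\oplus L^2$ — which generates the dynamical group $W_0(t)$ of the free wave equation (a classical fact, also underlying the decay estimates of \cite{3w}) — in direct sum with the constant-coefficient generator $(Q,P)\mapsto(P,-(\omega_0^2+\omega_1^2)Q)$ on $\R^3\oplus\R^3$, which plainly generates a $C_0$-group. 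The remainder $A_1(\Psi,\Pi,Q,P)=(0,\,Q\cdot\nabla\rho,\,0,\,\langle\Psi,\nabla\rho\rangle)$ is a finite-rank, hence bounded, operator on $\mathcal E$: since $\rho\in C_0^\infty$ one has $\nabla\rho\in L^2\cap L^{6/5}$, so $\Vert Q\cdot\nabla\rho\Vert_{L^2}\le|Q|\,\Vert\nabla\rho\Vert_{L^2}$, while $|\langle\Psi,\nabla\rho\rangle|\lesssim\Vert\Psi\Vert_{L^6}\lesssim\Vert\nabla\Psi\Vert_{L^2}$ by H\"older and the Sobolev embedding $\dot H^1\hookrightarrow L^6$. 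By the bounded-perturbation theorem for $C_0$-groups, $A$ generates a $C_0$-group $U(t)$ on $\mathcal E$; with $Z(t)=U(t)Z_0$ this gives (i) (a unique solution $Z(\cdot)\in C(\R,\mathcal E)$) and (ii) (each $U(t)$ is a bounded, hence continuous, operator on $\mathcal E$).

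For (iii), I would first observe that $\mathcal H_0$ is a continuous quadratic form on $\mathcal E$: the only term requiring comment, $\int\Psi\,Q\cdot\nabla\rho\,dx$, is bounded by $|Q|\,\Vert\Psi\Vert_{L^6}\Vert\nabla\rho\Vert_{L^{6/5}}\lesssim|Q|\,\Vert\nabla\Psi\Vert_{L^2}$. For $Z_0\in D(A)$ the orbit $t\mapsto Z(t)$ is $C^1$ with $\dot Z=AZ$, and differentiating the four groups of terms in $\mathcal H_0(Z(t))$ using the componentwise form \eqref{AA} of $A$ (the terms containing $\Delta$ paired through $\dot H^{-1}$--$\dot H^1$ duality) produces exact cancellation — which merely reflects that $A$ is the Hamiltonian vector field of $\mathcal H_0$, and uses the identity $\omega^2=\omega_0^2+\omega_1^2$ forced by consistency of $\mathcal H_0$ with \eqref{AA}. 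Hence $\mathcal H_0(Z(t))\equiv\mathcal H_0(Z_0)$ on $D(A)$, and this extends to all $Z_0\in\mathcal E$ by density of $D(A)$ and continuity of $Z_0\mapsto\mathcal H_0(Z_0)$ and of $U(t)$.

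For (iv), complete the square in $\Psi$ in $\mathcal H_0$ with $Q$ frozen. The minimiser of $\tfrac12\Vert\nabla\Psi\Vert_{L^2}^2-\int\Psi\,Q\cdot\nabla\rho$ solves $-\Delta\Psi_*=Q\cdot\nabla\rho$, i.e.\ $\Psi_*=-\sum_i Q_i\,\partial_i s_0$ (recall $-\Delta s_0=-\rho$, whence $\partial_i\partial_j s_0\in L^2$ and $\Psi_*\in\dot H^1$), with minimal value $-\tfrac12\Vert\nabla\Psi_*\Vert_{L^2}^2$; and by \eqref{WPO1},
\[
\Vert\nabla\Psi_*\Vert_{L^2}^2=\sum_{i,j}Q_iQ_j\!\int\nabla\partial_i s_0\cdot\nabla\partial_j s_0\,dx=-\sum_{i,j}Q_iQ_j\!\int\partial_i s_0\,\partial_j\rho\,dx=\omega_1^2|Q|^2 .
\]
Consequently
\[
\mathcal H_0(Z)=\tfrac12\Vert\Pi\Vert_{L^2}^2+\tfrac12\Vert\nabla(\Psi-\Psi_*)\Vert_{L^2}^2+\tfrac12|P|^2+\tfrac12(\omega^2-\omega_1^2)|Q|^2 ,
\]
and $\omega^2-\omega_1^2=\omega_0^2>0$ by \eqref{WPhesse}. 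Since moreover $\Vert\nabla\Psi\Vert_{L^2}\le\Vert\nabla(\Psi-\Psi_*)\Vert_{L^2}+\omega_1|Q|$, the right-hand side dominates $c\Vert Z\Vert_{\mathcal E}^2$ for some $c>0$; combining with (iii) and the continuity bound $\mathcal H_0(Z)\le C\Vert Z\Vert_{\mathcal E}^2$ yields $\Vert Z(t)\Vert_{\mathcal E}^2\le(C/c)\Vert Z_0\Vert_{\mathcal E}^2$ for all $t$, which is (iv).

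The main obstacle is step (iv): one must recognise that, after completing the square, the sole indefinite contribution to $\mathcal H_0$ is exactly $-\tfrac12\omega_1^2|Q|^2$, and that the normalisation \eqref{WPO1} of $\omega_1^2$ is precisely what makes it combine with $\tfrac12\omega^2|Q|^2$ into the strictly positive $\tfrac12\omega_0^2|Q|^2$. This is the point at which the stability of $q^*$ — namely $\omega_0>0$ in \eqref{WPhesse} — genuinely enters, and it is what makes the linearised flow \emph{uniformly} (not merely exponentially) bounded; the remaining assertions are soft, resting on $C_0$-group theory and Sobolev estimates.
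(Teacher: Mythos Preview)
The paper does not supply a proof of this lemma; it is stated and then Section~4 begins immediately. So there is nothing to compare against, and the relevant question is simply whether your argument stands on its own.

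It does. Your decomposition $A=A_0+A_1$ with $A_1$ bounded of finite rank is the natural route to (i)--(ii) via the bounded-perturbation theorem; the Sobolev pairing $|\langle\Psi,\nabla\rho\rangle|\lesssim\Vert\nabla\Psi\Vert_{L^2}$ is exactly what is needed and uses only \eqref{rho-as}. For (iii), your density-and-continuity extension from $D(A)$ is standard and correct. The key content is (iv), and your completion of the square is both correct and illuminating: writing $\Psi_*=-\sum_iQ_i\partial_is_0$ and using $\Delta s_0=\rho$ together with \eqref{WPO1} gives $\Vert\nabla\Psi_*\Vert_{L^2}^2=\omega_1^2|Q|^2$, so that
\[
\mathcal H_0(Z)=\tfrac12\Vert\Pi\Vert_{L^2}^2+\tfrac12\Vert\nabla(\Psi-\Psi_*)\Vert_{L^2}^2+\tfrac12|P|^2+\tfrac12\omega_0^2|Q|^2,
\]
which is coercive precisely because $\omega_0^2>0$ from \eqref{WPhesse}. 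Note that this stability hypothesis is not listed in the lemma's stated assumptions (only \eqref{rho-as} is), but \eqref{WPhesse} is declared at the opening of the section and is implicitly in force; your remark that this is where stability ``genuinely enters'' is exactly right, and without it part~(iv) would fail.
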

\setcounter{equation}{0}
\section{Decay of linearized dynamics}
We  prove the following long-time decay of the solution $Z(t)$ to (\ref{WPlin}):
\begin{proposition}\label{TDL}
Let the conditions (\ref{rho-as}) and (\ref{FGR}) hold, and let $Z_0\in{\mathcal E}$ be such that 
$$
\Vert\nabla\Psi_0\Vert_{L^2_\sigma}+\Vert\Pi_0\Vert_{L^2_\sigma}<\infty
$$ 
with some $\sigma>1$.  Then for $Z(t)=U(t)Z_0$ 
\begin{equation}\label{Z-dec}
\Vert Z(t)\Vert_{{\mathcal E}_{-\sigma}}\le C(\rho,\sigma)(1+|t|)^{-\sigma+1}
(\Vert\nabla\Psi_0\Vert_{L^2_\sigma}+\Vert\Pi_0\Vert_{L^2_\sigma}).
\end{equation}
\end{proposition}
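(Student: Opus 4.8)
The plan is to eliminate the field from \eqref{WPlin}, reduce to a finite–dimensional Volterra equation for the particle variable, establish a ``Fermi golden rule'' non-degeneracy of its symbol — where the Wiener condition \eqref{FGR} and the stability $d^2V(q^*)>0$ enter — and finally recover the field decay from the weighted-norm dispersion estimate for the free wave equation proved in \cite{3w}. Concretely, writing the field equations in \eqref{WPlin} in Duhamel form,
\begin{equation}\label{duh}
\big(\Psi(t),\Pi(t)\big)=W_0(t)(\Psi_0,\Pi_0)+\int_0^t W_0(t-s)\big(0,\nabla\rho\cdot Q(s)\big)\,ds,
\end{equation}
and inserting $\langle\Psi(t),\nabla\rho\rangle$ into $\ddot Q=\langle\Psi,\nabla\rho\rangle-(\omega_0^2+\omega_1^2)Q$, one gets a closed linear equation
\begin{equation}\label{volt}
\ddot Q(t)+(\omega_0^2+\omega_1^2)Q(t)=f(t)+(\mathcal K*Q)(t),\qquad Q(0)=Q_0,\quad\dot Q(0)=P_0,
\end{equation}
with $f(t)=\langle\,[W_0(t)(\Psi_0,\Pi_0)]_1,\nabla\rho\,\rangle$ and memory kernel $\mathcal K_{lj}(t)=\langle\,[W_0(t)(0,\partial_j\rho)]_1,\partial_l\rho\,\rangle$, $[\cdot]_1$ the first (field) component. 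Since $\rho$, hence $\nabla\rho$, is supported in $\{|x|\le R_\rho\}$, the strong Huygens principle gives $\mathcal K(t)=0$ for $|t|>2R_\rho$, with $\mathcal K\in C^\infty$; and by the dispersion decay of $W_0$ in weighted norms \cite{3w} together with Cauchy--Schwarz ($\rho\in C_0^\infty\subset L^2_\sigma$), $|f(t)|\le C(1+|t|)^{-\sigma+1}(\|\nabla\Psi_0\|_{L^2_\sigma}+\|\Pi_0\|_{L^2_\sigma})$.

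The second step analyses the dispersion relation. Because $\|U(t)\|_{\mathcal E}\le C$ (Lemma \ref{WPexlin}), the Fourier--Laplace transform of \eqref{volt} is legitimate for $\im\omega>0$ and yields $d(\omega)\tilde Q(\omega)=\tilde f(\omega)+P_0-i\omega Q_0$ with symbol
$$
d(\omega)=\omega_0^2+\omega_1^2-\omega^2-\hat{\mathcal K}(\omega),\qquad \hat{\mathcal K}(\omega)=\int_0^{2R_\rho}e^{i\omega t}\mathcal K(t)\,dt,
$$
which by \eqref{WPhesse} and the radiality of $\rho$ is $\delta_{lj}$ times a scalar (so we treat $d$ as scalar), and is entire in $\omega$ because $\mathcal K$ has compact support. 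The key point is $d(\omega)\ne0$ for all $\omega\in\R$. For $\omega\ne0$ a direct computation in Fourier variables shows $\im\hat{\mathcal K}(\omega)$ equals, up to a strictly positive factor, $|\omega|^{3}|\hat\rho(k)|^2|_{|k|=|\omega|}$, which is nonzero by the Wiener condition \eqref{FGR}, so $\im d(\omega)\ne0$. For $\omega=0$ one has $\hat{\mathcal K}(0)=\omega_1^2$ by \eqref{WPO1}, hence $d(0)=\omega_0^2>0$ — this is precisely where stability, $d^2V(q^*)>0$, is used. The same reasoning gives $d(\omega)\ne0$ for $\im\omega>0$ as well: a zero $\omega_*$ there would produce an eigenvector of $A$ with eigenvalue $-i\omega_*$ and field component $(-\Delta-\omega_*^2)^{-1}(\nabla\rho\cdot v)\in\dot H^1\cap L^2$, i.e. an exponentially growing solution of \eqref{WPlin}, contradicting Lemma \ref{WPexlin}(iv). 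Therefore $d(\omega)^{-1}$ is real-analytic on $\R$, analytic near $\{\im\omega\ge0\}$, and $\partial_\omega^k d(\omega)^{-1}=O(|\omega|^{-2-k})$ as $|\omega|\to\infty$ for every $k$.

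In the third step we extract the decay. Inverting the transform, with the contour pushed onto $\R$ (justified by Step 2 and the a priori bound), gives $Q(t)=Q_{\mathrm{hom}}(t)+(G*f)(t)$, where $G(t)=\frac1{2\pi}\int_\R e^{-i\omega t}d(\omega)^{-1}\,d\omega$ and $Q_{\mathrm{hom}}$ solves \eqref{volt} with $f\equiv0$, having transform $d(\omega)^{-1}(P_0-i\omega Q_0)$. From the regularity and decay of $d(\omega)^{-1}$, repeated integration by parts yields $|G(t)|+|G'(t)|\le C_N(1+|t|)^{-N}$ and $|Q_{\mathrm{hom}}(t)|+|\dot Q_{\mathrm{hom}}(t)|\le C_N(1+|t|)^{-N}(|Q_0|+|P_0|)$ for every $N$; combined with Step 1 and $\int_0^t(1+t-s)^{-N}(1+s)^{-\sigma+1}ds\le C(1+t)^{-\sigma+1}$ this gives
$$
|Q(t)|+|\dot Q(t)|\le C(1+|t|)^{-\sigma+1}\big(\|\nabla\Psi_0\|_{L^2_\sigma}+\|\Pi_0\|_{L^2_\sigma}+|Q_0|+|P_0|\big).
$$
Returning to \eqref{duh}: the first term is bounded by $C(1+|t|)^{-\sigma+1}(\|\nabla\Psi_0\|_{L^2_\sigma}+\|\Pi_0\|_{L^2_\sigma})$ directly by \cite{3w}; for the Duhamel integral, strong Huygens gives the improved bound $\|W_0(\tau)(0,\nabla\rho\cdot Q(s))\|_{\mathcal F_{-\sigma}}\le C(1+|\tau|)^{-\sigma}|Q(s)|$ (the compactly generated wave front escapes into $|x|\sim|\tau|$), and together with the previous display and $\int_0^t(1+t-s)^{-\sigma}(1+s)^{-\sigma+1}ds\le C(1+t)^{-\sigma+1}$ (valid for $\sigma>1$) this controls the second term. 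Adding the pieces and using time reversibility for $t<0$ yields \eqref{Z-dec}.

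The heart of the argument is Step 2: the assertion $d(\omega)\ne0$ on the whole real axis is exactly the Fermi-golden-rule condition for this model — the Wiener condition \eqref{FGR} excludes real zeros away from the threshold, the strict convexity of $V$ at $q^*$ excludes a zero at $\omega=0$, and the conserved positive-definite linearized energy excludes zeros in $\{\im\omega>0\}$. Once this is in hand, the rapid decay of the Volterra resolvent $G$ and the transfer of the free-wave dispersion estimate to the coupled dynamics are comparatively routine, the only non-elementary external input being the weighted-norm decay of $W_0$ imported from \cite{3w}.
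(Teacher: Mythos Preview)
Your proof is correct and follows essentially the same route as the paper: reduce the linearized system to a finite-dimensional equation for $Q$ via Duhamel/Fourier--Laplace, show the symbol is nonvanishing on the real axis (Wiener condition \eqref{FGR} for $\omega\ne0$ via the imaginary part, stability $\omega_0^2>0$ for $\omega=0$, the a priori bound for $\im\omega>0$), deduce rapid decay of the resolvent kernel, and then recover the field decay from the weighted free-wave estimate of \cite{3w}. The only cosmetic differences are that the paper works with the first-order $6\times6$ matrix $M(\lambda)$ (whose determinant is your $d^3$) and uses Sokhotsky--Plemelj rather than the explicit Huygens/kernel formulation, and that the paper records the slightly sharper intermediate bound $|f(t)|\le C(1+|t|)^{-\sigma}$, giving $|Q(t)|+|P(t)|\le C(1+|t|)^{-\sigma}$ before the final field estimate.
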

\bigskip
To prove this assertion  we apply the Fourier-Laplace transform
\begin{equation}\label{FL}
\tilde Z(\lambda)=\Lambda Z(t)=\int_0^\infty e^{-\lambda t}Z(t)dt,\quad\re\lambda>0
\end{equation}
to (\ref{WPlin}). We expect that
the solution $Z(t)$ is bounded in the norm $\Vert\cdot\Vert_{{\mathcal E}}$.
Then the integral (\ref{FL}) converges and is analytic for $\re\lambda>0$,
and
\begin{equation}\label{PW}
\Vert\tilde Z(\lambda)\Vert_{{\mathcal E}}\le \displaystyle\frac{C}{\re\lambda},\quad \re\lambda>0.
\end{equation}
Applying the Fourier-Laplace transform to (\ref{WPlin}), we obtain that
\begin{equation}\label{FLA}
\lambda\tilde Z(\lambda)=A\tilde Z(\lambda)+Z_0,\quad \re\lambda>0.
\end{equation}
Hence the solution $Z(t)$ is given by
\begin{equation}\label{FLAs}
\tilde Z(\lambda)=-(A-\lambda)^{-1}Z_0,\quad \re\lambda>0
\end{equation}
By (\ref{PW}), the  resolvent $R(\lambda)=(A-\lambda)^{-1}$ exists and is analytic in ${\mathcal E}$
for $\re\lambda>0$.

Let us  construct the resolvent for $\re\lambda>0$.
Equation (\ref{FLA}) takes the form
\begin{equation}\label{eq1}
\lambda\left(
\begin{array}{c}
\tilde\Psi \\ \tilde\Pi \\ \tilde Q \\ \tilde P
\end{array}\right)=\left(
\begin{array}{r}
\tilde\Pi \\
\Delta\tilde\Psi+\tilde Q\cdot\nabla\rho \\
\tilde P \\
-\langle\nabla\tilde\Psi,\rho\rangle-\omega^2\tilde Q
\end{array}
\right)+\left(
\begin{array}{c}
\Psi_0 \\ \Pi_0 \\ Q_0 \\ P_0
\end{array}
\right)
\end{equation}
where  $\omega^2=\omega_0^2+\omega_1^2$.
\smallskip\\
{\it Step i)} We consider  the first two equations of (\ref{eq1}):
\begin{equation}\label{F1}
\left\{
\begin{array}{rcl}
-\lambda\tilde\Psi+\tilde\Pi&=&-\Psi_0
\\
\Delta\tilde\Psi-\lambda\tilde\Pi&=&-\Pi_0-\tilde Q\cdot\nabla\rho
\end{array}\right.
\end{equation}
A solutions to the system (\ref{F1}) admits  the convolution representation
\begin{equation}\label{Psi}
\left\{\begin{array}{rcl}
\tilde\Psi&=&\lambda g_{\lambda}*\Psi_0+g_{\lambda}*\Pi_0+(g_{\lambda}*\nabla\rho)\cdot\tilde Q
\\
\tilde\Pi&=&\Delta g_{\lambda}*\Psi_0+\lambda g_{\lambda}*\Pi_0+\lambda (g_{\lambda}*\nabla\rho)\cdot\tilde Q,
\end{array}\right.
\end{equation}
where 
\begin{equation}\label{dete}
g_\lambda(z)=(-\Delta+\lambda^2)^{-1}=\frac{e^{-\lambda|z|}}{4\pi|z|}.
\end{equation}
\noindent{\it Step ii)}
We consider  the last two equations of (\ref{eq1}):
\begin{equation}\label{lte}
\left\{\begin{array}{rcl}
-\lambda\tilde Q+\tilde P&=&- Q_0\\
-\omega_2 \tilde Q-\langle\nabla\tilde\Psi,\rho\rangle-\lambda\tilde P&=&- P_0
\end{array}\right.
\end{equation}
Let us write the first equation of (\ref{Psi}) in the form
$\tilde\Psi(x)=\tilde\Psi_1(\tilde Q)+\tilde\Psi_2(\Psi_0,\Pi_0)$,
where
\begin{equation}\label{Psi12}
\tilde\Psi_1(\tilde Q)
=\tilde Q\cdot(g_{\lambda}*\nabla\rho),
~~~~~~~~~
\tilde\Psi_2(\Psi_0,\Pi_0)=\lambda g_{\lambda}*\Psi_0+g_{\lambda}*\Pi_0.
\end{equation}
Then the second equation in (\ref{lte}) becomes
$$
-\omega^2 \tilde Q-\langle\nabla\tilde\Psi_1,
\rho\rangle-\lambda\tilde P=-P_0+\langle\nabla\tilde\Psi_2,\rho\rangle=:- P_0'.
$$
Now we compute  term $\langle\nabla\tilde\Psi_1,\rho\rangle$:
\begin{eqnarray}\nonumber
\langle\nabla\tilde\Psi_1,\rho\rangle&=&
-\langle\tilde\Psi_1,\partial_i\rho\rangle=
-\langle\sum\limits_j(g_{\lambda}*\partial_j\rho)\tilde Q_j,\partial_i\rho\rangle\\
\nonumber
&=& -\sum\limits_j\langle g_{\lambda}*\partial_j\rho,\partial_i\rho\rangle\tilde Q_j=
-\sum\limits_j H_{ij}(\lambda)\tilde Q_j,
\end{eqnarray}
where
\begin{eqnarray}\label{Cij}
H_{ij}(\lambda):&=&\langle g_{\lambda}*\partial_j\rho,
\partial_i\rho\rangle=
\langle i\hat g_{\lambda}(k)k_j\hat\rho(k),ik_i\hat\rho(k)\rangle
\nonumber\\
\nonumber\\
&=&\langle\frac{ik_j\hat\rho(k)}{k^2+\lambda^2},ik_i\hat\rho(k)\rangle=
\int\frac{k_ik_j|\hat\rho(k)|^2dk}{k^2+\lambda^2}.
\end{eqnarray}
The matrix $H$ with entries $H_{jj}$, $1\le j\le3$,
is well defined for $\re\lambda>0$ since the denominator does not vanish.
The matrix $H$ is diagonal; moreover, 
\begin{equation}\label{hii}
H_{11}(\lambda)=H_{22}(\lambda)=H_{33}(\lambda)=h(\lambda).
\end{equation}
Finally, the system (\ref{lte}) takes the form
\begin{equation}\label{Mlam}
M(\lambda)\left(
\begin{array}{c}
\tilde Q \\ \tilde P
\end{array}
\right)=\left(
\begin{array}{c}
 Q_0 \\  P_0'
\end{array}\right),
\end{equation}
where
\[
M(\lambda)=\left(
\begin{array}{cc}
\lambda E  & -E \\
\omega^2 E-H(\lambda) & \lambda E
\end{array}\right).
\]
\begin{lemma}\label{cmf}
The matrix-valued function $M(\lambda)$ ( $M^{-1}(\lambda)$)
admits an analytic (meromorphic) continuation to the entire complex plane ${\mathbb C}$.
\end{lemma}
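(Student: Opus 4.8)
The plan is to reduce the statement to the single scalar function $h(\lambda)$ of \eqref{hii} and to exploit the block structure of $M(\lambda)$. Since $H(\lambda)=h(\lambda)E$, each of the four $3\times3$ blocks of $M(\lambda)$ is a scalar multiple of $E$, so a direct computation (equivalently, the block determinant formula applied to the $2\times2$ scalar matrix with rows $(\lambda,-1)$ and $(\omega^2-h(\lambda),\lambda)$) gives
\[
\det M(\lambda)=\bigl(\lambda^2+\omega^2-h(\lambda)\bigr)^3 ,
\]
while the entries of the adjugate matrix $\mathrm{adj}\,M(\lambda)$ are polynomials in $\lambda$ and $h(\lambda)$. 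Hence, as soon as $h(\lambda)$ is shown to be the restriction of an entire function, $M(\lambda)$ extends analytically to $\C$ and $M^{-1}(\lambda)=\mathrm{adj}\,M(\lambda)/\det M(\lambda)$ extends meromorphically, its poles being exactly the zeros of $\lambda^2+\omega^2-h(\lambda)$; these are isolated because $h(\lambda)\to0$ while $\lambda^2+\omega^2\to\infty$ as $\lambda\to+\infty$ along $\R$, so $\det M(\lambda)\not\equiv0$. Everything therefore comes down to proving that $h$ is entire.

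For this I would use, not the Fourier integral, but the convolution form of $h$ already appearing in \eqref{Cij}: combining \eqref{hii}, \eqref{Cij} and the explicit kernel \eqref{dete}, one has for $\re\lambda>0$
\[
h(\lambda)=\langle g_\lambda*\partial_1\rho,\partial_1\rho\rangle
=\int_{\R^3}\int_{\R^3}\frac{e^{-\lambda|x-y|}}{4\pi|x-y|}\,\partial_1\rho(x)\,\partial_1\rho(y)\,dx\,dy .
\]
The key observation is that the right-hand side makes sense, and is holomorphic, for every $\lambda\in\C$. Indeed, by \eqref{rho-as} the function $\partial_1\rho$ is supported in $\{|x|\le R_\rho\}$, so on the domain of integration $|x-y|\le2R_\rho$ and $|e^{-\lambda|x-y|}|\le e^{2R_\rho|\lambda|}$, while the only singularity, $|x-y|^{-1}$, is integrable over $\R^3\times\R^3$ against the bounded compactly supported weight $|\partial_1\rho(x)\partial_1\rho(y)|$. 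For fixed $x\ne y$ the integrand is entire in $\lambda$, and on every disc $\{|\lambda|\le N\}$ it is dominated by the fixed function $e^{2NR_\rho}(4\pi|x-y|)^{-1}|\partial_1\rho(x)\partial_1\rho(y)|\in L^1(\R^3\times\R^3)$; hence by dominated convergence together with Morera's theorem (or by differentiating under the integral sign) $h$ is holomorphic on $\{|\lambda|\le N\}$ for every $N$, i.e.\ entire. Since by \eqref{Cij} this representation coincides with $\int k_1^2|\hat\rho(k)|^2(k^2+\lambda^2)^{-1}\,dk$ for $\re\lambda>0$, the entire function so produced genuinely continues $h$ from the half-plane $\re\lambda>0$ (on $\re\lambda<0$ the Fourier integral would represent a different branch, which is irrelevant here).

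I expect this last step to be the only one requiring any thought, and even there the work is essentially done once the representation is set up correctly: the integral formula for $h$ extends automatically to all of $\C$ because $\rho$ has compact support — so the factor $e^{-\lambda|x-y|}$ never grows on the region of integration, however large $\re\lambda$ is — and because the Coulomb-type singularity $|z|^{-1}$ of the Yukawa kernel $g_\lambda(z)$ is locally integrable in three dimensions. The determinant and adjugate computations, the identification of the poles of $M^{-1}$ with the zeros of $\lambda^2+\omega^2-h(\lambda)$, and the bound $\det M\not\equiv0$ are all routine.
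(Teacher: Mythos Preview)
Your proposal is correct and follows the same approach as the paper: both use that $g_\lambda(z)=e^{-\lambda|z|}/(4\pi|z|)$ is entire in $\lambda$, that the compact support of $\rho$ makes $H_{ij}(\lambda)=\langle g_\lambda*\partial_j\rho,\partial_i\rho\rangle$ entire, and that $H(\lambda)\to0$ as $\re\lambda\to\infty$ forces $\det M\not\equiv0$ so that $M^{-1}$ is meromorphic. Your version is more detailed---explicitly reducing to the scalar $h$, writing out the determinant (which in the paper appears only afterwards in \eqref{detM}), and spelling out the Morera/dominated-convergence justification---but the underlying argument is identical.
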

\begin{proof}
The Green function $g_{\lambda}$ admits an analytic continuation in $\lambda$
to the entire complex plane ${\mathbb C}$.
Then an analytic continuation of  $M(\lambda)$ exists in view of
(\ref{Cij}) since the function $\rho(x)$
is compactly supported because of  (\ref{rho-as}).
Then the inverse matrix is meromorphic since it exists
for large $\re\lambda$. This fact follows from (\ref{Mlam})
since $H(\lambda)\to 0$, $\re\lambda\to\infty$ in view of (\ref{Cij}).
\end{proof}
Since the  matrix $H(\lambda)$ is  diagonal, the matrix 
$M(\lambda)$ is equivalent to three independent $2\times2$- matrices. 
Namely, let us transpose the columns and rows of the matrix $M(\lambda)$ 
in the order $(142536)$. Then we get the matrix with three $2\times 2$- blocks on the main diagonal. 
Therefore, the determinant of $M(\lambda)$ is the product of  determinants 
of the three matrices. Namely, 
\begin{equation}\label{detM}
\det M(\lambda)=(\lambda^2+\omega^2-h(\lambda))^3
=(\lambda^2+\omega_0^2+\omega_1^2-h(\lambda))^3,
\end{equation}
where
$$
\omega_1^2=\int\frac{k_1^2|\hat\rho(k)|^2dk}{k^2},
\quad h(\lambda)=\int\frac{k_1^2|\hat\rho(k)|^2dk}{k^2+\lambda^2}.
$$
\begin{proposition}
The matrix $M^{-1}(i\nu+0)$ is analytic in $\nu\in{\mathbb R}$.
\end{proposition}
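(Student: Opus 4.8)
The plan is to use Lemma~\ref{cmf} and the explicit determinant (\ref{detM}) to reduce the statement to a scalar non-vanishing condition, and then to verify that condition through a ``Fermi Golden Rule'' computation. Set
\[
D(\lambda):=\lambda^2+\omega_0^2+\omega_1^2-h(\lambda),
\]
so that $\det M(\lambda)=D(\lambda)^3$ by (\ref{detM}); here $h$ denotes the entire continuation supplied by Lemma~\ref{cmf}, and this is what the notation $h(i\nu+0)$ refers to. Then $M^{-1}(\lambda)=(\det M(\lambda))^{-1}\,\mathrm{adj}\,M(\lambda)$ is meromorphic on $\mathbb C$ with $\mathrm{adj}\,M$ entire, so $M^{-1}$ is analytic in a full complex neighbourhood of any point $i\nu_0\in i\R$ at which $D(i\nu_0+0)\neq0$, and in particular $\nu\mapsto M^{-1}(i\nu+0)$ is then real-analytic near $\nu_0$. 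Hence it suffices to prove $D(i\nu+0)\neq0$ for every $\nu\in\R$.

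For $\nu\neq0$ I would compute the imaginary part of $h(i\nu+0)$. For $\re\lambda>0$ one has $h(\lambda)=\int k_1^2|\hat\rho(k)|^2(k^2+\lambda^2)^{-1}\,dk$ by (\ref{Cij})--(\ref{hii}), and $(\eps+i\nu)^2\to-\nu^2+i0\,\mathrm{sgn}(\nu)$ as $\eps\downarrow0$; since $|\hat\rho|^2$ is smooth, the Sokhotski--Plemelj formula gives
\[
h(i\nu+0)=\mathrm{P.V.}\!\int\frac{k_1^2|\hat\rho(k)|^2}{k^2-\nu^2}\,dk-i\pi\,\mathrm{sgn}(\nu)\int k_1^2|\hat\rho(k)|^2\,\delta(k^2-\nu^2)\,dk.
\]
Passing to polar coordinates, using $\delta(k^2-\nu^2)=\tfrac1{2|\nu|}\delta(|k|-|\nu|)$, the spherical symmetry of $\hat\rho$ (a consequence of (\ref{rho-as})) and $\int_{|k|=r}k_1^2\,dS=\tfrac{4\pi}{3}r^4$, the last integral equals $\tfrac{2\pi}{3}|\nu|^3|\hat\rho(|\nu|)|^2$, whence
\[
\im h(i\nu+0)=-\tfrac{2\pi^2}{3}\,\mathrm{sgn}(\nu)\,|\nu|^3\,|\hat\rho(|\nu|)|^2.
\]
By the Wiener condition (\ref{FGR}) the right-hand side is nonzero for all $\nu\neq0$, and since $\lambda^2+\omega_0^2+\omega_1^2=-\nu^2+\omega_0^2+\omega_1^2$ is real, $D(i\nu+0)$ has nonzero imaginary part. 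Thus $\det M(i\nu+0)\neq0$ for all $\nu\neq0$.

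The case $\nu=0$ is the only delicate point, precisely because there the imaginary part above degenerates. At $\lambda=0$ the continuation is regular and $h(0)=\int k_1^2|\hat\rho(k)|^2k^{-2}\,dk=\omega_1^2$ by the definition of $\omega_1^2$ recorded after (\ref{detM}); hence $D(0)=\omega_0^2$, which is strictly positive because $q^*$ is a stable critical point of $V$ (Definition~\ref{WP1.1}) in the normalization (\ref{WPhesse}). Therefore $\det M(0)=\omega_0^6\neq0$ as well, and combining the two cases we obtain $\det M(i\nu+0)\neq0$ for every $\nu\in\R$; the proposition then follows from the reduction in the first paragraph. The main effort is the Sokhotski--Plemelj evaluation and checking non-degeneracy of the imaginary part off the origin; at $\nu=0$ the Fermi-Golden-Rule term vanishes and one must instead invoke the stability hypothesis $\omega_0>0$, which furnishes the needed positivity.
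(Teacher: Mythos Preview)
Your argument is correct and follows essentially the same route as the paper: reduce to the scalar condition $D(i\nu+0)\neq0$ via the determinant formula (\ref{detM}), handle $\nu=0$ by the cancellation $h(0)=\omega_1^2$ together with the stability assumption $\omega_0>0$, and handle $\nu\neq0$ by a Sokhotski--Plemelj computation showing $\im h(i\nu+0)\neq0$ under the Wiener condition. Your version is simply more explicit, carrying the polar-coordinate evaluation through to the constant $-\tfrac{2\pi^2}{3}\mathrm{sgn}(\nu)|\nu|^3|\hat\rho(|\nu|)|^2$ and spelling out the adjugate argument for analyticity, whereas the paper leaves these details implicit.
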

\begin{proof} It suffices to prove that the limit matrix $M(i\nu+0)$ 
is invertible for $\nu\in{\mathbb R}$ if
$\rho$ satisfies the Wiener condition (\ref{FGR}).
Formula (\ref{detM}) implies $\det M(0)=\omega_0^2>0$. For $\nu\not=0$, $\nu\in{\mathbb R}$,
we consider
\begin{equation}\label{Hjjlim}
h(i\nu+\varepsilon)=\int\frac{k_1^2|\hat\rho(k)|^2dk}{k^2-(\nu-i\varepsilon)^2},
\quad\varepsilon>0.
\end{equation}
The denominator $D(\nu,k)=k^2-\nu^2$ vanishes
on  $T_{\nu}=\{k:k^2=\nu^2\}$. Denote by $dS$  the surface area element.
Then from the Sokhotsky-Plemelj formula for $C^1$-functions it follows that
\begin{equation}\label{ImHjj}
\Im h(i\nu+0)=-\frac{\nu}{|\nu|}\pi\int_{T_{\nu}}
\frac{k_1^2|\hat\rho(k)|^2}{|\nabla D(\nu,k)|}dS\not =0,
\end{equation}
since the integrand in (\ref{ImHjj})
is positive by the Wiener condition (\ref{FGR}).
Now, the invertibility  of $M(i\nu)$ follows from (\ref{detM}).
\end{proof}
\subsection{Time decay }
Here we prove Proposition \ref{TDL}.
First, we obtain  decay (\ref{Z-dec}) for the vector components $Q(t)$ and $P(t)$ of $Z(t)$.
By (\ref{Mlam}), the components are given by the
Fourier integral
\begin{equation}\label{QP1i}
\left(\!\!\begin{array}{c}
Q(t) \\ P(t)
\end{array}\!\!\right)=\displaystyle\frac 1{2\pi}
\int e^{i\nu t}M^{-1}(i\nu\!+0)\left(\!\!\begin{array}{c}
Q_0 \\ P_0'
\end{array}\!\!\right)d\nu={\mathcal L}(t)\left(\!\!\begin{array}{c}
Q_0 \\ P_0
\end{array}\!\!\right)+{\mathcal L}(t)*\left(\!\!\begin{array}{c}
0 \\ f(t)
\end{array}\!\!\right),
\end{equation}
where
$$
{\mathcal L}(t)=\displaystyle\frac 1{2\pi}\int e^{i\nu t}M^{-1}(i\nu+0)d\nu=\lambda^{-1}M^{-1}(i\nu+0),\quad
$$
and
\begin{eqnarray}\nonumber
f(t)&=&\Lambda^{-1}[\langle\Psi_2(\Psi_0,\Pi_0),\nabla\rho\rangle]
=\Lambda^{-1}[\langle i\nu g_{i\nu}*\Psi_0+g_{i\nu}*\Pi_0,\nabla\rho\rangle]\\
\label{f-def}
&=&\langle W_0(t)[(\Psi_0,\Pi_0)],\nabla\rho\rangle.
\end{eqnarray}
\\
We write the nonzero entries  of the matrix $M(i\nu+0)$: 
$$
\frac{i\nu}{-\nu^2+\omega^2-h(i\nu+0)},\quad\frac{1}{-\nu^2+\omega^2-h(i\nu+0)},
\quad\frac{-\omega^2+h(i\nu)}{-\nu^2+\omega^2-h(i\nu+0)}.
$$
Hence, 
$$
|M^{-1}(i\nu+0)|\leq \frac{C}{|\nu |},\quad
|\partial^k M^{-1}(i\nu+0)|\leq \frac{C_k}{|\nu |^2},
\quad\nu \in {\mathbb R},\quad |\nu|\ge 1,\quad k\in{\mathbb N}.
$$
Therefore, ${\mathcal L}(t)$ is continuous in $t\in{\mathbb R}$ and
\begin{equation}\label{cM-dec}
{\mathcal L}(t)={\mathcal O}(|t|^{-N}),\quad t\to\infty,\quad\forall N>0.
\end{equation}
For the solutions of the free wave equation
the following dispersion decay holds:
\begin{lemma}\label{WPI0} (sf. \cite[Proposition 2.1]{3w})
Let $(\Psi_0,\Pi_0)\in{\mathcal F}_0$ be such that 
\[
\Vert (\Psi_0,\Pi_0)\Vert_{\mathcal F_\sigma}<\infty
\]
with some $\sigma >1$. Then
\begin{equation}\label{lins}
  \Vert  W(t)[(\Psi_0,\Pi_0)]\Vert_{{\mathcal F}_{-\sigma}}
\le C(1+|t|)^{-\sigma}\Vert(\Psi_0,\Pi_0)\Vert_{\mathcal F_\sigma}, \quad t\in{\mathbb R}.
\end{equation}
\end{lemma}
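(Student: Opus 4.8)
This is essentially \cite[Proposition 2.1]{3w}, so I only indicate the line of argument; it parallels the treatment of $\mathcal L(t)$ above. By time reversal it suffices to treat $t>0$. The plan is to represent the free group $W(t)$ by a contour integral of the resolvent and then estimate the resulting oscillatory integral by non-stationary phase.

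First I would write $W(t)[(\Psi_0,\Pi_0)]=(\Psi(t),\Pi(t))$ and record, exactly as in (\ref{FL})--(\ref{dete}) with $\rho=0$, the Fourier--Laplace transforms for $\re\lambda>0$:
\[
\tilde\Psi(\lambda)=g_\lambda*(\lambda\Psi_0+\Pi_0),\qquad
\tilde\Pi(\lambda)=g_\lambda*(\Delta\Psi_0+\lambda\Pi_0),\qquad
g_\lambda(z)=\frac{e^{-\lambda|z|}}{4\pi|z|}.
\]
Using $\Delta g_\lambda=-1+\lambda^2 g_\lambda$ and transferring the spatial derivative in $\nabla\Psi(t)$ onto the kernel, I would rewrite $\nabla\Psi(t)$ and $\Pi(t)$ as finite sums of convolutions of $\Psi_0$ and $\Pi_0$ with the kernels $g_\lambda$, $\nabla g_\lambda$, $\lambda g_\lambda$ (all locally integrable in $z$), up to terms supported at $t=0$. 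Since these kernels are entire in $\lambda$, their boundary values on the imaginary axis exist as bounded operators $\mathcal F_\sigma\to\mathcal F_{-\sigma}$ for each $\nu\in\mathbb R$, and shifting the inversion contour onto $\{\re\lambda=0\}$ gives
\[
W(t)[(\Psi_0,\Pi_0)]=\frac1{2\pi}\int_{\mathbb R}e^{i\nu t}\,\mathcal A(\nu)[(\Psi_0,\Pi_0)]\,d\nu,
\]
with $\mathcal A(\nu)\in B(\mathcal F_\sigma,\mathcal F_{-\sigma})$ assembled from the boundary values $g_{i\nu}$, $\nabla g_{i\nu}$, $i\nu g_{i\nu}$.

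The structural observation I would then use is that every kernel entering $\mathcal A(\nu)$ carries the oscillatory factor $e^{-i\nu|x-y|}$, so the phase of the $\nu$-integral is $\nu(t-|x-y|)$; non-stationary phase gives rapid decay in the variable $t-|x-y|$, i.e.\ the Schwartz kernel of $W(t)$ is concentrated near the light cone $\{|x-y|=t\}$ (the spectral form of the strong Huygens principle). Next I would integrate by parts $k$ times in $\nu$: each derivative hitting $e^{i\nu t}$ produces $t^{-1}$, while each derivative hitting $e^{-i\nu|x-y|}$ produces a factor $|x-y|$, absorbed together with one power of the decay into the weights via
\[
(1+|x|)^{-\sigma}(1+|y|)^{-\sigma}\,|x-y|\le(1+|x|)^{1-\sigma}(1+|y|)^{1-\sigma},
\]
using $|x-y|\le(1+|x|)(1+|y|)$. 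Iterating, one may integrate by parts as long as the weight budget $\sigma$ is not exhausted, and the fractional part of $\sigma$ would be recovered by a H\"older estimate on the top-order $\nu$-derivative of the weighted kernel (equivalently, by interpolating between $\floor{\sigma}$ and $\ceil{\sigma}$ integrations by parts). After a smooth cutoff in $\nu$ this handles the low-frequency part; for the high-frequency part I would additionally use $\lambda^2 g_\lambda=1+\Delta g_\lambda$ to see that $i\nu g_{i\nu}$, together with the Laplacian-type terms, forms a family bounded uniformly in $\nu$, so that differentiation in $\nu$ again only costs powers of $|x-y|$. Adding the two contributions yields (\ref{lins}).

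The step I expect to be the main obstacle is precisely this bookkeeping: matching the number of admissible integrations by parts to the weight exponent $\sigma$ uniformly in $\nu$, and in particular controlling the high-frequency tail $|\nu|\to\infty$, where the naive operator bound for $g_{i\nu}$ between weighted spaces degrades and one must exploit the algebraic cancellations in the wave group together with the gain furnished by the weight $(1+|x|)^{\sigma}$ on the data.
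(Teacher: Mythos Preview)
The paper does not prove this lemma at all: it is simply quoted from \cite[Proposition~2.1]{3w}, and the surrounding text only \emph{uses} the estimate (\ref{lins}) (to get (\ref{f-dec}) and then (\ref{QP})). So there is no ``paper's own proof'' to compare your sketch against.

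That said, your outline is broadly the right strategy and is in the spirit of \cite{3w}: represent the free wave group through the boundary values of the resolvent $g_{i\nu}$ on the spectral axis and gain powers of $t^{-1}$ by integrating by parts in $\nu$, trading each $\partial_\nu$ hitting the kernel for a factor $|x-y|$ that is absorbed by the weights. Two small corrections: the identity you use should read $\Delta g_\lambda=-\delta_0+\lambda^2 g_\lambda$ (Dirac mass, not the constant $1$), and correspondingly $\lambda^2 g_\lambda=\delta_0+\Delta g_\lambda$; this is exactly what produces the ``terms supported at $t=0$'' you allude to. Also, in 3D the strong Huygens principle makes the kernel of $W(t)$ exactly supported on the light cone, not merely concentrated near it, which in fact makes the weighted estimate cleaner than your non-stationary-phase phrasing suggests. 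None of this affects the overall scheme, but since the paper treats the lemma as a black box from \cite{3w}, a full write-up here would be out of proportion; a one-line reference, as the authors do, is appropriate.
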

\bigskip
Lemma \ref{WPI0} and the definition (\ref{f-def}) imply 
\begin{equation}\label{f-dec}
|f(t)|\le C(\sigma,\rho)(1+|t|)^{-\sigma}\Vert(\Psi_0,\Pi_0)\Vert_{\mathcal F_\sigma},\quad t\in{\mathbb R}.
\end{equation}
Therefore, (\ref{QP1i}), (\ref{cM-dec}) and (\ref{f-dec}) imply
\begin{equation}\label{QP}
|Q(t)|+|P(t)|\le C(\sigma,\rho)(1+|t|)^{-\sigma}\Vert(\Psi_0,\Pi_0)\Vert_{\mathcal F_\sigma}
\end{equation}
Then  (\ref{Z-dec})  holds for the vector components $Q(t)$ and $P(t)$.
\medskip\\
Now we  prove  (\ref{Z-dec}) for the field components of $Z(t)$.
The first two equations of (\ref{WPlin}) have the form
\begin{equation}\label{linf}
\left(
\begin{array}{ll}
\dot \Psi(t)\\ \dot \Pi(t) 
\end{array}
\right)
=\left(
\begin{array}{ll}
0 & 1 \\
\Delta & 0
\end{array}
\right)\left(
\begin{array}{ll}
\Psi(t)\\ \Pi(t) 
\end{array}
\right)+\left(
\begin{array}{l}
0 \\
Q(t)\cdot\nabla\rho
\end{array}
\right).
\end{equation}
The integrated version of (\ref{linf}) reads
\begin{equation}\label{Duh}
(\Psi(t),\Pi(t))=W(t)[(\Psi_0,\Pi_0)]+\int_0^tW(t-s)
[0, Q(s)\cdot\nabla\rho]ds,~~~~~~t\ge 0.
\end{equation}
By (\ref{lins}) and (\ref{QP}), 
$$
\Vert(\Psi(t),\Pi(t))\Vert_{{\mathcal F}_{-\sigma}}\le C(\rho,\sigma)(1+|t|)^{-\sigma+1}
(\Vert\nabla\Psi_0\Vert_{L^2_\sigma}+\Vert\Pi_0\Vert_{L^2_\sigma}),~~~~t\in{\mathbb R}.
$$
Proposition \ref{TDL} is proved.
\hfill$\Box$

 \setcounter{equation}{0}
 \section{Asymptotic stability of stationary states}
Here we prove Theorem \ref{WPB}.
First, we obtain bounds for the nonlinear part $B(X(t))=(0,\pi_1,0,p_1)$ 
defined in (\ref{WPBB}) - (\ref{p1}). We have
\begin{eqnarray}\nonumber
\Vert\pi_1(t)\Vert_{L^2_{\sigma}}&\le& {\mathcal R}(|q|)|q(t)|^2,\\
\nonumber
|p_1(t)|&\le&{\mathcal R}(|q|)[|q(t)|\Vert\psi(t)\Vert_{\dot H^1_{-\sigma}}+|q(t)|^2],
\end{eqnarray}
where $R(A)$ denotes a positive function that remains bounded for sufficiently small $A$.
Hence
\begin{equation}\label{B-est}
\Vert B(t)\Vert_{{\mathcal E}_\sigma}\le {\mathcal R}(|q|)\Vert X(t)\Vert_{{\mathcal E}_{-\sigma}}^2.
\end{equation}
We introduce the  majorant
\begin{equation}\label{maj}
m(t)=\sup_{0\leq s\leq t}(1+s)^{-\sigma}\Vert X(s)\Vert_{{\mathcal E}_{-\sigma}}.
\end{equation}
We fix $\varepsilon>\Vert X(0)\Vert_{{\mathcal E}_{-\sigma}}$ and introduce the
{\it exit time}
\begin{equation}\label{t*}
t_*=\sup \{t>0:m(t)\le \varepsilon\}.
\end{equation}
The integrated version of (\ref{WPAB}) is written as
\begin{equation}\label{WPkolb}
 X(t)=e^{At}X(0)+\int^t_0 ds\,e^{A(t-s)} B(X(s)).
\end{equation}
Proposition \ref{Z-dec}  implies the integral inequality
\begin{eqnarray}\label{WPiin}
  \Vert X(t)\Vert_{{\mathcal E}_{-\sigma}}&\le& C\,
   (1+|t|)^{-\sigma}\Vert(\psi_0,\pi_0)\Vert_{\mathcal F_\sigma}\\
\nonumber
   &+& C \int^t_0 ds \, (1+|t-s|)^{-\sigma}{\|X(s)\|}^2_{{\mathcal E}_{-\sigma}},
\end{eqnarray}
for $t< t_*$.
We multiply both sides of (\ref{WPkolb}) by $(1+t)^{-\sigma}$,
and take the supremum in $t\in[0,t_*]$. Then
$$
m(t)\le C\Vert(\psi_0,\pi_0)\Vert_{\mathcal F_\sigma}+C\sup_{t\in[0,t_*]}\displaystyle
\int_0^t\frac{(1+t)^{\sigma}}{(1+|t-s|)^{\sigma}}\frac{m^2(s)}
{(1+s)^{2\sigma}}ds
$$
for $t< t_*$. Since $m(t)$ is a monotone increasing function, we get
\begin{equation}\label{mest}
m(t)\le C(\Vert(\psi_0,\pi_0)\Vert_{\mathcal F_\sigma}+Cm^2(t)I(t),\quad
t\le t_*,
\end{equation}
where
$$
I(t)=\int_0^t\frac{(1+t)^{\sigma}}{(1+|t-s|)^{\sigma}}\frac{m^2(s)}
{(1+s)^{2\sigma}}ds\le\overline I<\infty,\quad t\ge 0,\quad\sigma>1.
$$
Therefore, (\ref{mest}) takes the form
\begin{equation}\label{mest1}
m(t)\le C\Vert(\psi_0,\pi_0)\Vert_{\mathcal F_\sigma}+C\overline I m^2(t),
\quad t\le t_*,
\end{equation}
which implies that $m(t)$ is bounded for $t<t_*$; moreover,
\begin{equation}\label{m2est}
m(t)\le C\Vert(\psi_0,\pi_0)\Vert_{\mathcal F_\sigma},\quad t<t_*,
\end{equation}
since $m(0)=\Vert(\psi_0,\pi_0)\Vert_{\mathcal F_\sigma}$
is sufficiently small by (\ref{close}).

The constant $C$ in the estimate (\ref{m2est}) is independent of $t_*$.
We choose $d_{0}$ in (\ref{close}) so small that
$\Vert(\psi_0,\pi_0)\Vert_{\mathcal F_\sigma}<\varepsilon/(2C)$, 
which is possible due to (\ref{close}).
Then the estimate (\ref{m2est}) implies 
$t_*=\infty$, and (\ref{m2est}) holds for all $t>0$ if $d_{0}$ is small enough.
 {\hfill $\Box$}
\section{Scattering asymptotics}
\setcounter{equation}{0}
Here we prove Theorem \ref{main}. From the first two equations of (\ref{WP2})
we obtain the inhomogeneous wave equation for the difference 
$F(x,t)=(\psi(x,t),\pi(x,t))=(\phi(x,t),\pi(x,t))-(s_{q^*},0)$:
\begin{eqnarray}
\dot \psi(x,t)&=&\pi(x,t),\\
\nonumber
\dot \pi(x,t)&=&\Delta \psi(x,t)+\rho(x-q^*)-\rho(x-q(t)).
\end{eqnarray} 
Then 
\begin{equation}\label{eqacc} 
F(t)=W_0(t)F(0)-\int_0^tW_0(t-s)[(0,\rho(x-q^*)-\rho(x-q(s)))]ds. 
\end{equation} 
To obtain  the asymptotics (\ref{S}), it suffices to prove that 
$F(t)=W_0(t)\Phi_++r_+(t)$ for some $\Phi_+\in\dot H^1\oplus L^2$ and 
$\Vert r_+(t)\Vert_{\dot H^1\oplus L^2}={\mathcal O}(t^{-\sigma+1})$. 
This fact is equivalent to the asymptotics 
\begin{equation}\label{Sme} 
W_0(-t)F(t)=\Phi_++r_+'(t),~~~~~ 
\Vert r_+'(t)\Vert_{\\dot H^1\oplus L^2}={\mathcal O}(t^{-\sigma+1}), 
\end{equation} 
since $W_0(t)$ is a unitary group on  $\dot H^1\oplus L^2$ by the energy 
conservation for the free wave equation. Finally, the
asymptotics (\ref{Sme}) hold since (\ref{eqacc}) implies  
\begin{equation}\label{duhs} 
W_0(-t)F(t)= 
F(0)-\int_0^t W_0(-s)R(s)ds,\quad R(s)=(0,\rho(x-q^*)-\rho(x-q(s))  
\end{equation} 
We set
\begin{equation}\label{Pr}
\Phi_+=F(0)-\int_0^\infty W_0(-s)R(s)ds,
\end{equation}
and
$$
r_+'(t)=\int_t^\infty W_0(-s)R(s)ds
$$
The integral  on the right hand side of (\ref{Pr}) 
converges in  $\dot H^1\oplus L^2$ 
with the rate ${\mathcal O}(t^{-\sigma+1})$ because 
$\Vert  W_0(-s)R(s)\Vert_{\dot H^1\oplus L^2} ={\mathcal O}(s^{-\sigma})$ by the unitarity 
of $W_0(-s)$ and the decay rate 
$\Vert R(s)\Vert_{\dot H^1\oplus L^2}={\mathcal O}(s^{-\sigma})$ 
which follows  
from  the conditions (\ref{rho-as}) on $\rho$ 
and the asymptotics (\ref{WPEE1}).
Hence, $\Phi_+\in\dot H^1\oplus L^2$ and  (\ref{rm}) holds.
 {\hfill $\Box$}

\end{document}